\DeclareMathOperator{\R}{\mathbb{R}}
\DeclareMathAlphabet{\mymathbb}{U}{BOONDOX-ds}{m}{n}
\DeclareMathOperator{\vol}{vol}
\newcommand{\lb}{\langle}
\newcommand{\rb}{\rangle}
\newcommand{\lp}{\left(}
\newcommand{\rp}{\right)}
\newcommand{\T}{\mathcal{T}}
\theoremstyle{plain}
\newtheorem{thm}{Theorem}
\newtheorem{lem}[thm]{Lemma}
\newtheorem{cor}[thm]{Corollary}
\newtheorem{prop}[thm]{Proposition}
\theoremstyle{definition}
\newtheorem{defn}[thm]{Definition}
\newtheorem{prob}[thm]{Problem}
\theoremstyle{remark}
\newtheorem*{rmk}{Remark}
\newtheorem*{sktch}{Sketch of Proof}
\newcommand\copyrighttext{%
\footnotesize\textcopyright 2023 IEEE. Personal use of this material is permitted.  Permission from IEEE must be obtained for all other uses, in any current or future media, including reprinting/republishing this material for advertising or promotional purposes, creating new collective works, for resale or redistribution to servers or lists, or reuse of any copyrighted component of this work in other works.}
\newcommand\copyrightnotice{
\begin{tikzpicture}[remember picture,overlay]
\node[anchor=south,yshift=10pt] at (current page.south) {\fbox{\parbox{\dimexpr\textwidth-\fboxsep-\fboxrule\relax}{\copyrighttext}}};
\end{tikzpicture}}
\title{\LARGE \bf
Incompressible Optimal Transport and Applications in Fluid Mixing
}
\author{Max Emerick, Bassam Bamieh
\thanks{M. Emerick and B. Bamieh are with the Department of Mechanical Engineering,
		University of California, Santa Barbara, USA
        {\tt\small \{memerick,bamieh\}@ucsb.edu}}%
}
\begin{document}

\maketitle
\thispagestyle{empty}
\pagestyle{empty}

\begin{abstract}
The problem of incompressible fluid mixing arises in numerous engineering applications and has been well-studied over the years, yet many open questions remain. This paper aims to address the question “what do efficient flow fields for mixing look like, and how do they behave?” We approach this question by developing a framework which is inspired by the dynamic and geometric approach to optimal mass transport. Specifically, we formulate the fluid mixing problem as an optimal control problem where the dynamics are given by the continuity equation together with an incompressibility constraint. We show that within this framework, the set of reachable fluid configurations can formally be endowed with the structure of an infinite-dimensional Riemannian manifold, with a metric which is induced by the control effort, and that flow fields which are maximally efficient at mixing correspond to geodesics in this Riemannian space.
\end{abstract}

\copyrightnotice

\section{Introduction}

The efficient mixing of fluids is an important problem in many engineering applications. As a result, this problem has been well-studied over the years. However, there remain many important open questions in this area.

The fluid mixing problem is usually posed as some variant of the following: given two initially unmixed fluids in some particular configuration, find an input/flow field/feedback control law which mixes the two fluids efficiently.

Early studies of fluid mixing approached this problem from a dynamical systems point-of-view. The central claim here was that efficient mixing necessitated the existence of chaotic attractors in the flow \cite{Aref2020,Ottino1989}.

Later, more quantitative studies posed the problem as one of optimization: ``find the \emph{most} efficient input/flow field/control law/etc. for mixing two fluids''. This requires one to define -- in rigorous mathematical terms -- what it means to ``mix'' two fluids, and what it means for this process to be ``efficient''. Neither of these definitions are at all obvious, and many measures of both fluid mixedness and mixing efficiency have been proposed over the years \cite{Thiffeault2012}.

Further complicating this problem is the fact that the equations of fluid mechanics -- most famously the Navier-Stokes equations, but the Euler equations as well -- are not fully understood, and depend in a delicate way on various problem parameters, including dimensionality, boundary conditions, and the regularity of allowable flow fields. Bressan \cite{Bressan2003}, for example, conjectured in 2003 that the maximum achievable rate of mixing depends on the regularity of allowable flow fields, and several results along these lines been proven since the conjecture was first posed (see e.g. \cite{Lin2011,Lunasin2012,Seis2013,Iyer2014,Zlatos2017,Alberti2019}).

On the engineering side, many methods for the control of mixing processes have been proposed as well. To mention just a few of the most relevant works, various groups have considered control for mixing by: switching between a set of prescribed velocity fields \cite{DAlessandro1999,Mathew2007,Gubanov2010,Gubanov2012}, boundary feedback control \cite{Aamo2003,Hu2018}, optimal control \cite{Liu2008,Couchman2010,Eggl2020}, gradient descent \cite{Lin2011}, and reinforcement learning \cite{Konishi2022}.

Since there is a large literature on the fluid mixing problem, it is very difficult to be exhaustive in our review, and we refer the reader to \cite{Aref2017} for further references.

One question which has not been fully answered in the literature (and to which this paper hopes to contribute something) is: ``what do efficient flow fields look like, and how do they behave?'' The motivation for asking this question is that it allows one not only to develop control strategies, but to do \emph{codesign}. In other words, the knowledge of efficient flow fields informs the design of the mixing devices themselves (not just the strategies used to control them).

In this paper, we take an abstract approach to this problem which is inspired by the dynamic/geometric approach to optimal mass transport (as in, e.g., \cite{Benamou2000,Otto2001}). In fact, the theory that we present here is essentially just ``incompressible optimal transport'' -- that is, optimal transport with an incompressibility constraint. While we are not the first to propose something like this (see, e.g., \cite{Brenier1989,Brenier1999,Liu2019,Baradat2019}), we are not aware of any other works which apply this approach to the fluid mixing problem.

The main results which come out of this approach are 1) that the set of fluid configurations which are reachable from a given starting configuration can formally be given the structure of an infinite-dimensional Riemannian manifold with a metric which is induced by the control effort, and 2) that flow fields which are maximally efficient at mixing are geodesics in this Riemannian space.

The rest of this paper proceeds as follows. In Section \ref{problem_formulation_sec}, we formulate our version of the fluid mixing problem. In Section \ref{problem_structure_sec}, we analyze its structure and prove our main results. We conclude in Section \ref{conclusion_sec} with a brief review and a discussion of future work.

\section{Problem Formulation} \label{problem_formulation_sec}

In this section, we formulate our version of the fluid mixing problem. Part of the challenge here is that there is no clear best way to do this, and thus, our first move is to try and stay away from specific assumptions and abstract as much as possible.

In the settings in which we consider, we begin with two fluids, A and B, which are contained in some domain $\Omega$. $\Omega$ is assumed to be some connected and bounded set in $\R^n$ (or more generally, some $n$-dimensional Riemannian manifold, possibly with boundary). Practically speaking, we will be most interested in the cases $n = 2$ and $n=3$, but we refrain from making any assumptions for now.

We will assume that the fluids A and B can be described by density functions, denoted by $\rho_A$ and $\rho_B$, respectively. In the classical mixing problem, $\rho_A$ and $\rho_B$ would be indicator functions which are supported on complementary sets. However, while this is an example of primary interest for us, we again refrain from making specific assumptions. For notational purposes, though, we will use $P$ to denote the space of allowable densities.

Our objective is to mix the two fluids, i.e., to bring the densities $\rho_A$ and $\rho_B$ close together. This ``closeness'' or ``mixedness'' is quantified by a metric $d$ which is defined on the space of densities $P$. Thus, $A$ and $B$ will be well-mixed if $d(\rho_A,\rho_B)$ is small, and poorly mixed if it is large. Various metrics $d$ have been proposed in the literature, including various measures of entropy, the so-called ``mix-norm'', negative-order Sobolev norms, and the Wasserstein distance from optimal mass transport \cite{Thiffeault2012}. But again, we will refrain from making specific assumptions here.

One assumption that we will make is that $\rho_A$ and $\rho_B$ are fully determined by each other (e.g., if $\rho_B = 1 - \rho_A$). Then, we will assume that instead of trying to control two densities to be close to each other, it will be equivalent to focus on just one of these densities, say, $\rho_A$, and try to control it toward some ``desired'' density $\rho_*$. In the classic mixing problem, $\rho_*$ would be the ``perfectly-mixed'' (i.e. uniform) density having the same mass as $\rho_A$. This is a mild assumption, but simplifies our problem formulation, and makes a tighter connection with the classical theory of optimal transport. Thus, in what follows, we will drop the subscript A and speak only of controlling the single density $\rho$.

Mixing typically occurs as the combination of two processes: transport and diffusion. Often, transport is primarily responsible for mixing fluids on a large scale (and is the process that we, as designers, have the most control over), while diffusion is primarily responsible for small-scale mixing and becomes significant only after large-scale mixing has already occurred. Thus, we will focus on mixing without diffusion in this paper. In this case, the dynamics of the density $\rho$ are governed by the \emph{continuity equation}
\begin{equation} \label{continuity_eqn} 
	\partial_t \rho_t(x) ~=~ - \nabla \cdot (\rho_t(x) \, v_t(x)) .
\end{equation}
Here, $\partial_t$ denotes the time derivative, $t \in [0,T]$ the time variable, $x \in \Omega$ the spatial variable, $\nabla \cdot$ the divergence operator, and $v$ the flow field/velocity field which transports the fluid. Often, the fluid is very nearly incompressible, which gives the additional \emph{incompressibility constraint}
\begin{equation} \label{incompressibility_eqn}
	\nabla \cdot v \equiv 0.
\end{equation}
It is this constraint which gives fluid mixing much of its structure as well as its associated challenges.

In addition to incompressibility, allowable flow fields $v$ will also need to satisfy some appropriate boundary conditions, typically either the \emph{no-flux} condition $v \cdot \hat{n} \equiv 0$ on $\partial \Omega$ or the \emph{no-slip} condition $v \equiv 0$ on $\partial \Omega$, depending on the model considered. Again, however, we will delay this choice, but will use $V$ to denote the space of allowable vector fields. We will assume that $V$ forms a vector space.

To define rigorously the notion of ``efficient'' mixing, we will introduce a norm $\| \cdot \|$ on the space of allowable vector fields $V$. The norm $\| \cdot \|$ is taken as a proxy for ``mixing effort'' -- a large norm meaning an aggressive flow field and a small norm meaning a conservative one. An ``efficient'' flow field is thus one that makes $d(\rho,\rho_*)$ and $\| v \|$ small simultaneously. Various norms $\| \cdot \|$ have been proposed in the literature \cite{Thiffeault2012}, including the kinetic energy $\| \cdot \|_{L^2}$, viscous dissipation energy/enstrophy $\| \nabla ( \cdot ) \|_{L^2} \approx \| \cdot \|_{H^1}$, and ``palenstrophy'' $\| \Delta ( \cdot ) \|_{L^2} \approx \| \cdot \|_{H^2}$. But again, we will refrain from making specific assumptions for now.

Putting all of this together, we formulate our version of the mixing problem as an optimal control problem which trades off between mixedness and mixing effort as follows.

\begin{prob}[Mixing]
	Given initial and target densities $\rho_i,\rho_* \in P$, time horizon $T$, norm $\| \cdot \| : V \to \R$, metric $d: P \times P \to \R$, and trade-off parameter $\alpha >0$, solve
	\begin{equation} \label{mixing_prob}
		\inf_{\rho, v} \int_0^T \| v_t \|^2 \, dt \, + \, \alpha d^2(\rho_T,\rho_*) ~~ \text{s.t.} ~ \begin{cases}
			\partial_t \rho = - \nabla \cdot (\rho v) \\
			\nabla \cdot v = 0 \\
			\rho_0 = \rho_i
		\end{cases} \hspace{-4mm}
	\end{equation}
	We will refer to $\rho$ as the state, $P$ as the state space, $v$ as the input, and $V$ as the input space.
\end{prob}

This problem generalizes the classical fluid mixing problem, since by taking $\rho_i$ to be an indicator function on some set and $\rho_*$ to be the uniform density with the same mass as $\rho_i$, we recover a classical formulation. However, the connection to optimal transport should also be clear, since by taking $\| v_t \| = \| v_t \|_{L^2(\rho_t)}$ and replacing the penalty $\alpha d^2(\rho_T,\rho_*)$ with the hard constraint $\rho_T = \rho_*$, we recover the Benamou-Brenier dynamic formulation of optimal mass transport \cite{Benamou2000}, but with an additional incompressibility constraint $\nabla \cdot v = 0$.

There is a rich structure in the above problem which we hope to effectively communicate in this paper. In short, this problem can be decomposed into a series of more fundamental geometric problems: computing geodesics, projections, critical points, and so on.

We remark that since we have not specified exactly what the spaces $P$, $V$, the norm $\| \cdot \|$, or the metric $d(\cdot,\cdot)$ are (nor how exactly we interpret solutions to \eqref{continuity_eqn}-\eqref{incompressibility_eqn}), our results are formal in nature. In particular, we will be deliberately loose about the regularity of $\rho$, $v$, and other key objects involved in our arguments, and we do not address the well-posedness of any of the stated problems.

\section{Analysis of Problem Structure} \label{problem_structure_sec}

In this section, we show that the problem \eqref{mixing_prob} can be broken down into three parts: 1) selection of a final density $\rho_f$, 2) selection of a path from $\rho_i$ to $\rho_f$, and 3) selection of a velocity field $v$ which generates that path. This is achieved by constructing a metric $m$ on the set of reachable states $[\rho_i] \subset P$ which is induced by the control effort. The final state $\rho_f$ can then be chosen as a minimizer of an optimization problem involving $m$ and $d$, the path from $\rho_i$ to $\rho_f$ as a minimizing curve in the space $([\rho_i],m)$, and the velocity field $v$ as the minimal velocity field which generates that curve. This approach is highly inspired by the Benamou-Brenier dynamic formulation of optimal mass transport \cite{Benamou2000}.

\subsection{Reachability, Metrics Induced by Control Costs, Problem Decomposition}

The first question to be addressed is that of reachability.

\begin{defn}[Reachability]
	Given two states $\rho_a, \rho_b \in P$, we say that $\rho_b$ is \emph{reachable} from $\rho_a$ on the interval $[t_i,t_f]$ if there exists a trajectory $\rho: [t_i,t_f] \to P$ and an input $v: [t_i,t_f] \to V$ such that $\rho_{t_i} = \rho_a$, $\rho_{t_f} = \rho_b$, and the pair $(\rho,v)$ satisfy the dynamics \eqref{continuity_eqn}-\eqref{incompressibility_eqn} on the interval $[t_i,t_f]$ (possibly in some suitable weak sense).
\end{defn}

The system \eqref{continuity_eqn}-\eqref{incompressibility_eqn} possesses some strong symmetries, which will lead to similarly strong conclusions about the reachability structure of the state space. The first symmetry is time-reparameterization.

\begin{lem}[Time-Reparameterization] \label{time_reparam_lem}
	Suppose that $(\rho,v)$ satisfy the dynamics \eqref{continuity_eqn}-\eqref{incompressibility_eqn} on $[t_i,t_f]$ with $\rho_{t_i} = \rho_a$ and $\rho_{t_f} = \rho_b$, and let $\sigma : [t_i',t_f'] \to [t_i,t_f]$ be any continuously differentiable and monotone nondecreasing function such that $\sigma(t_i') = t_i$ and $\sigma(t_f') = t_f$. Then the time-reparameterized functions $\tilde{\rho}(t) := \rho (\sigma(t))$, $\tilde{v}(t) := \dot{\sigma}(t) \, v(\sigma(t))$ satisfy the dynamics \eqref{continuity_eqn}-\eqref{incompressibility_eqn} on $[t_i',t_f']$, with $\tilde{\rho}_{t_i'} = \rho_a$ and $\tilde{\rho}_{t_f'} = \rho_b$.
\end{lem}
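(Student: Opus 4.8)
The plan is to verify the conclusion by checking, in turn, the two endpoint conditions, the incompressibility constraint \eqref{incompressibility_eqn} for $\tilde{v}$, and the continuity equation \eqref{continuity_eqn} for the pair $(\tilde{\rho},\tilde{v})$. The endpoint conditions are immediate: since $\sigma(t_i') = t_i$ and $\sigma(t_f') = t_f$, we get $\tilde{\rho}_{t_i'} = \rho_{\sigma(t_i')} = \rho_{t_i} = \rho_a$ and likewise $\tilde{\rho}_{t_f'} = \rho_b$.

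The single observation that drives both remaining steps is that $\dot{\sigma}(t)$ is a \emph{scalar depending only on time}, hence it commutes with every spatial operator and in particular factors out of the divergence $\nabla \cdot$. For incompressibility this gives at once
\begin{equation*}
\nabla \cdot \tilde{v}_t = \nabla \cdot \big( \dot{\sigma}(t) \, v_{\sigma(t)} \big) = \dot{\sigma}(t) \, \big( \nabla \cdot v_{\sigma(t)} \big) = 0,
\end{equation*}
using that $v$ satisfies \eqref{incompressibility_eqn} at the time $\sigma(t)$.

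For the continuity equation I would differentiate $\tilde{\rho}_t = \rho_{\sigma(t)}$ in $t$ by the chain rule, $\dot{\tilde{\rho}}_t = \dot{\sigma}(t) \, \dot{\rho}_{\sigma(t)}$, then substitute \eqref{continuity_eqn} for $(\rho,v)$ at time $\sigma(t)$ and pull the scalar $\dot{\sigma}(t)$ through the divergence to regroup:
\begin{equation*}
\dot{\tilde{\rho}}_t = -\dot{\sigma}(t) \, \nabla \cdot \big( \rho_{\sigma(t)} \, v_{\sigma(t)} \big) = -\nabla \cdot \big( \rho_{\sigma(t)} \, \dot{\sigma}(t) \, v_{\sigma(t)} \big) = -\nabla \cdot \big( \tilde{\rho}_t \, \tilde{v}_t \big),
\end{equation*}
which is precisely \eqref{continuity_eqn} for the reparameterized pair. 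Here the grouping $\rho_{\sigma(t)} \, (\dot{\sigma}(t) \, v_{\sigma(t)}) = \tilde{\rho}_t \, \tilde{v}_t$ is exactly why the definition of $\tilde{v}$ must carry the factor $\dot{\sigma}$.

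In the smooth (strong-solution) regime this calculation is purely mechanical and presents no analytical obstacle; the factor $\dot{\sigma}$ is continuous by the $C^1$ hypothesis, and only monotone nondecreasing (not strictly increasing) is needed, so the argument even tolerates $\dot{\sigma} = 0$ on subintervals, where both sides of \eqref{continuity_eqn} simply vanish. The only point genuinely needing care -- and the step I would flag as the main obstacle -- is the parenthetical ``possibly in some suitable weak sense.'' If one adopts the standard weak-in-space, distributional-in-time formulation, requiring $\frac{d}{dt}\int_\Omega \rho_t \, \phi \, dx = \int_\Omega \rho_t \, v_t \cdot \nabla \phi \, dx$ for suitable spatial test functions $\phi$, then the identical chain-rule computation applies verbatim to $\int_\Omega \tilde{\rho}_t \, \phi \, dx = \int_\Omega \rho_{\sigma(t)} \, \phi \, dx$, with $\dot{\sigma}$ again emerging from differentiating in $t$; thus the weak statement reduces to the same one-line manipulation. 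Since the paper treats its results as formal, I would present the strong-form computation and remark that the weak case follows by this substitution.
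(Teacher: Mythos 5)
Your proof is correct and follows exactly the route the paper itself indicates: the paper's justification for this lemma is the one-line remark that it ``results directly from the chain rule,'' which is precisely your computation (chain rule on $\tilde{\rho}_t = \rho_{\sigma(t)}$, pulling the time-dependent scalar $\dot{\sigma}(t)$ through the divergence, and checking endpoints). Your added remarks on tolerating $\dot{\sigma}=0$ and on the weak formulation are sound elaborations consistent with the paper's formal treatment, not a different approach.
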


\begin{proof}
	Results directly from substituting $(\tilde{\rho},\tilde{v})$ into \eqref{continuity_eqn}-\eqref{incompressibility_eqn} and applying the chain rule.
\end{proof}

The above lemma immediately implies the following.

\begin{lem}
	If $\rho_b$ is reachable from $\rho_a$ on the finite time horizon $[t_i,t_f]$, $-\infty < t_i < t_f < \infty$, then it is reachable from $\rho_a$ on any other finite time horizon $[t_i',t_f']$.
\end{lem}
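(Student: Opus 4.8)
The plan is to reduce this statement directly to the Time-Reparameterization Lemma by exhibiting an explicit change of time variable between the two horizons. Since $\rho_b$ is reachable from $\rho_a$ on $[t_i,t_f]$, there is by definition a pair $(\rho,v)$ satisfying the dynamics \eqref{continuity_eqn}--\eqref{incompressibility_eqn} on $[t_i,t_f]$ with $\rho_{t_i} = \rho_a$ and $\rho_{t_f} = \rho_b$. The goal is then to transport this pair onto the target interval $[t_i',t_f']$ while preserving the dynamics and endpoint values.

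First I would construct the affine reparameterization
\[
\sigma(t) ~=~ t_i + \frac{t_f - t_i}{t_f' - t_i'} \, (t - t_i'), \qquad t \in [t_i',t_f'].
\]
This map is continuously differentiable (indeed smooth), and its derivative $\dot{\sigma}(t) = (t_f - t_i)/(t_f' - t_i')$ is a positive constant, because both intervals are nondegenerate and finite; hence $\sigma$ is monotone nondecreasing, as required by the lemma. By direct substitution one checks $\sigma(t_i') = t_i$ and $\sigma(t_f') = t_f$, so $\sigma$ maps $[t_i',t_f']$ onto $[t_i,t_f]$ with the correct endpoint behavior.

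Next I would invoke the Time-Reparameterization Lemma with this $\sigma$. The lemma guarantees that the reparameterized pair $\tilde{\rho}(t) := \rho(\sigma(t))$, $\tilde{v}(t) := \dot{\sigma}(t) \, v(\sigma(t))$ again satisfies the dynamics \eqref{continuity_eqn}--\eqref{incompressibility_eqn} on $[t_i',t_f']$, now with $\tilde{\rho}_{t_i'} = \rho_a$ and $\tilde{\rho}_{t_f'} = \rho_b$. This is precisely the assertion that $\rho_b$ is reachable from $\rho_a$ on $[t_i',t_f']$, which completes the argument.

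Since every step is either a direct substitution or an appeal to the preceding lemma, I do not anticipate any genuine obstacle. The only point requiring a moment's care is verifying that $\sigma$ satisfies the hypotheses of the lemma --- in particular that its derivative is a positive constant, so that $\sigma$ is both $C^1$ and monotone --- but this follows immediately from the affineness of $\sigma$ together with the finiteness and nondegeneracy of both horizons. I would also note that the construction is symmetric: applying the same argument with the roles of the two intervals exchanged shows reachability is an equivalence that is insensitive to the choice of finite time horizon.
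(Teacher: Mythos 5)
Your proof is correct and takes essentially the same approach as the paper, which simply notes that the statement follows immediately from the Time-Reparameterization Lemma; your explicit affine map $\sigma(t) = t_i + \frac{t_f - t_i}{t_f' - t_i'}(t - t_i')$ is the natural instantiation the paper leaves implicit.
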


\begin{proof}
	By definition, there exists a solution $(\rho,v)$ to \eqref{continuity_eqn}-\eqref{incompressibility_eqn} with $\rho_{t_i} = \rho_a$ and $\rho_{t_f} = \rho_b$. Define $\sigma(t) := (t - t_i)^*(t_f^\prime - t_i^\prime)/(t_f-t_i) + t_i^\prime$ and $(\tilde{\rho},\tilde{v})$ as in Lemma \ref{time_reparam_lem}. Then $(\tilde{\rho},\tilde{v})$ satisfy \eqref{continuity_eqn}-\eqref{incompressibility_eqn} with $\rho_{t_i^\prime} = \rho_a$ and $\rho_{t_f^\prime} = \rho_b$.
\end{proof}

This time-reparameterization symmetry of solutions will be a key tool in many of the arguments that follow. Also, note that this justifies dropping the dependence on time when discussing reachability.

In addition to time-reparameterization symmetry, the system \eqref{continuity_eqn}-\eqref{incompressibility_eqn} also possesses time-reversibility.

\begin{lem}[Time-Reversibility] \label{time_rev_lem}
	If $(\rho,v)$ satisfy the dynamics \eqref{continuity_eqn}-\eqref{incompressibility_eqn} on $[t_i,t_f]$ with $\rho_{t_i} = \rho_a$ and $\rho_{t_f} = \rho_b$, then the time-reversed functions $\tilde{\rho}_t := \rho \big( t_f - t \big)$, $\tilde{v}_t := - v \big( t_f - t \big)$ satisfy the dynamics \eqref{continuity_eqn}-\eqref{incompressibility_eqn} on $[t_i,t_f]$, with $\tilde{\rho}_{t_i} = \rho_b$ and $\tilde{\rho}_{t_f} = \rho_a$.
\end{lem}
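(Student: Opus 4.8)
The plan is to verify directly that the time-reversed pair $(\tilde\rho,\tilde v)$ satisfies each of \eqref{continuity_eqn} and \eqref{incompressibility_eqn} on $[t_i,t_f]$, in the same spirit as the preceding time-reparameterization lemma, which likewise followed from the chain rule. Time reversal is in fact nothing but the reparameterization $\sigma(t) = t_i + t_f - t$ (reducing to the stated $t_f - t$ when $t_i = 0$, which we may assume without loss of generality by the previous lemma), except that $\sigma$ is monotone \emph{decreasing} rather than increasing. The entire content of the proof is therefore the bookkeeping of signs: differentiating $\rho(t_f - t)$ in time produces one factor of $-1$, while the ansatz $\tilde v_t = -v(t_f - t)$ introduces a second. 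The crux is to observe that these two signs cancel in the continuity equation, leaving it invariant, whereas the incompressibility constraint is insensitive to the sign flip entirely.

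Concretely, I would proceed in three steps. First, the boundary conditions: substituting $t = t_i$ and $t = t_f$ into $\tilde\rho_t = \rho(t_i + t_f - t)$ immediately gives $\tilde\rho_{t_i} = \rho_{t_f} = \rho_b$ and $\tilde\rho_{t_f} = \rho_{t_i} = \rho_a$, as claimed. Second, the continuity equation: writing $s = t_i + t_f - t$ and applying the chain rule yields $\dot{\tilde\rho}_t = -\dot\rho_s$; substituting the hypothesis $\dot\rho_s = -\nabla\cdot(\rho_s v_s)$ and then rewriting $\rho_s = \tilde\rho_t$ and $v_s = -\tilde v_t$, the two minus signs combine to reproduce exactly $\dot{\tilde\rho}_t = -\nabla\cdot(\tilde\rho_t \tilde v_t)$. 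Third, incompressibility: since the spatial operator $\nabla\cdot$ commutes with the purely temporal reflection $t \mapsto s$ and is linear, we obtain $\nabla\cdot\tilde v_t = -\nabla\cdot v_s = 0$.

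I do not anticipate a genuine analytic obstacle; the argument is a short sign-tracking exercise, and the only points requiring care are the endpoint-swapping bookkeeping in the boundary conditions and the commutation of $\nabla\cdot$ past the temporal reflection. The one place where additional work could be needed is if the dynamics \eqref{continuity_eqn}--\eqref{incompressibility_eqn} are interpreted only in a weak sense, in which case each identity above should be checked by pairing against a time-dependent test function and changing variables $t \mapsto s$ in the temporal integral; this introduces a compensating Jacobian factor of $-1$ together with a swap of the limits of integration, but is otherwise routine.
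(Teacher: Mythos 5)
Your proof is correct and is exactly the argument the paper intends (the lemma is stated without proof as an immediate sign-tracking consequence of the chain rule, in the same spirit as the time-reparameterization lemma). Your handling of the reflection $\sigma(t) = t_i + t_f - t$ -- noting that the paper's formula $\rho(t_f - t)$ implicitly assumes $t_i = 0$, and that decreasing $\sigma$ falls outside the reparameterization lemma's hypotheses -- together with the weak-formulation caveat, is careful and matches the paper's framework.
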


\begin{proof}
	Results from direct substitution.
\end{proof}

We also define the following concatenation operation which will be useful in several of our arguments.

\begin{defn}[Concatenation]
	Given pairs $(\rho_1,v_1)$ and $(\rho_2,v_2)$ satisfying the dynamics \eqref{continuity_eqn}-\eqref{incompressibility_eqn} on the intervals $[t_{1i},t_{1f}]$ and $[t_{2i},t_{2f}]$ with $\rho_1(t_{1f}) = \rho_2(t_{2i})$, we define the concatenation operation
	\begin{multline}
		\text{concat}((\rho_1,v_1),(\rho_2,v_2))(t) ~:=~ \\
		\begin{cases}
			(\rho_1(t),v_1(t)) \quad \text{if} \quad  t \in [t_{1i},t_{1f}], \\
			(\rho_2(t - t_{1f} + t_{2i}),v_2(t - t_{1f} + t_{2i})) \\ \qquad \text{if} \quad t \in (t_{1f}, t_{1f} + t_{2f} - t_{2i}]
		\end{cases} .
	\end{multline}
	It is immediate to check that $\text{concat}((\rho_1,v_1),(\rho_2,v_2))$ satisfies the dynamics \eqref{continuity_eqn}-\eqref{incompressibility_eqn} on the interval $[t_{1i},t_{1f} + t_{2f} - t_{2i}]$.
\end{defn}

These properties are all used in the proof of the following.

\begin{lem}
	Reachability is an equivalence relation on the state space $P$. That is,
	\begin{enumerate}
		\item Every state is reachable from itself,
		\item $\rho_a$ is reachable from $\rho_b$ iff $\rho_b$ is reachable from $\rho_a$,
		\item If $\rho_b$ is reachable from $\rho_a$ and $\rho_c$ is reachable from $\rho_b$, then $\rho_c$ is reachable from $\rho_a$.
	\end{enumerate}
\end{lem}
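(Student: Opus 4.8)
The plan is to verify the three defining properties of an equivalence relation one at a time, in each case exhibiting an explicit admissible pair $(\rho,v)$ and invoking one of the structural results established above.

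For reflexivity, I would use the stationary trajectory generated by the zero velocity field. Since $V$ is a vector space it contains $0$, and taking $v_t \equiv 0$ makes the incompressibility constraint $\nabla \cdot v = 0$ hold trivially, while the continuity equation reduces to $\dot{\rho}_t = -\nabla\cdot(\rho_t v_t) = 0$. Hence the constant curve $\rho_t \equiv \rho_a$ is an admissible trajectory on any interval with $\rho_{t_i} = \rho_{t_f} = \rho_a$, so every state is reachable from itself.

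For symmetry, suppose $\rho_b$ is reachable from $\rho_a$ via a pair $(\rho,v)$ on $[t_i,t_f]$. The Time-Reversibility Lemma produces the reversed pair $(\tilde{\rho},\tilde{v})$ with $\tilde{v}_t = -v(t_f-t)$, which satisfies the dynamics and carries $\rho_b$ to $\rho_a$. The only admissibility check is that $\tilde{v}$ again lies in $V$, and this is immediate since $V$, being a vector space, is closed under negation. Exchanging the roles of $\rho_a$ and $\rho_b$ gives the converse implication, so reachability is symmetric.

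For transitivity, suppose $\rho_b$ is reachable from $\rho_a$ via $(\rho_1,v_1)$ and $\rho_c$ is reachable from $\rho_b$ via $(\rho_2,v_2)$, both on finite intervals. Applying the concatenation operation glues these into a single pair on the union interval; by the remark following the Concatenation definition this pair again satisfies \eqref{continuity_eqn}--\eqref{incompressibility_eqn} and realizes the transition from $\rho_a$ to $\rho_c$, and if convenient one may first apply the Time-Reparameterization Lemma to align the two segments. The one point deserving care is the splice: the state is continuous there, taking the common value $\rho_b$ from both sides, whereas the velocity field may jump. This is exactly where the ``suitable weak sense'' interpretation of the dynamics is needed, and is precisely what the Concatenation definition is designed to accommodate. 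I expect this junction/regularity bookkeeping to be the only genuine obstacle, the reflexive and symmetric cases being essentially immediate.
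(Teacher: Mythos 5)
Your proposal is correct and follows exactly the paper's own argument: the zero velocity field (admissible since $V$ is a vector space) for reflexivity, the Time-Reversibility Lemma for symmetry, and concatenation (with time-reparameterization if needed) for transitivity. The additional bookkeeping you note about admissibility of $-v$ and the weak-sense interpretation at the splice point is sound and merely elaborates what the paper leaves implicit.
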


\begin{proof}
	(1) Take $v \equiv 0$ (this is allowed because $V$ is assumed to be a vector space). (2) Direct application of time-reversibility. (3) Direct application of concatenation.
\end{proof}

Recall that the equivalence relation partitions the state space $P$ into equivalence classes of mutually reachable states. In light of this fact, we will use the notation $[\rho_a]$ to denote the equivalence class of states which are reachable from $\rho_a$, and $\rho_a \sim \rho_b$ to mean that $\rho_a$ and $\rho_b$ are in the same equivalence class (i.e. are reachable from each other).

The following problem is thus foundational.

\begin{prob}[Reachability]
	Given two states $\rho_a, \rho_b \in P$, determine if $\rho_a \sim \rho_b$.
\end{prob}

Finding a general test to determine whether $\rho_a \sim \rho_b$ seems to be a surprisingly nontrivial problem. That said, we do have some simple necessary conditions. For example, any conserved quantity gives rise to a necessary condition for reachability. The most fundamental conserved quantity for the system \eqref{continuity_eqn}-\eqref{incompressibility_eqn} is the total mass of the density
\begin{equation}
	\text{mass}(\rho) ~:=~ \int_\Omega \rho(x) \, dx ,
\end{equation}
but we also have some more discerning conserved quantities.

\begin{prop}
	$\rho_\# \ell$ is a conserved quantity for the system \eqref{continuity_eqn}-\eqref{incompressibility_eqn}, where $\#$ denotes the (measure) pushforward, and $\ell$ the Lebesgue measure on $\Omega$.
\end{prop}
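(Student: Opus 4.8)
The plan is to exploit the fact that, under the incompressibility constraint, the continuity equation reduces to pure advection and the flow it generates is volume-preserving; together these say that the values of $\rho_t$ are merely rearranged over a partition of $\Omega$ into fixed-volume pieces, so the distribution of those values cannot change. Here I read $\rho_\#\ell$ as the pushforward of Lebesgue measure under the map $\rho_t : \Omega \to \R$, i.e.\ the measure on $\R$ recording how much volume carries each density value.

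First I would use \eqref{incompressibility_eqn} to rewrite \eqref{continuity_eqn}. Expanding, $\nabla\cdot(\rho_t v_t) = v_t\cdot\nabla\rho_t + \rho_t\,\nabla\cdot v_t$, and the second term vanishes by incompressibility, so \eqref{continuity_eqn} becomes the transport equation $\dot\rho_t = -\,v_t\cdot\nabla\rho_t$. Next I would introduce the flow map $\Phi_t:\Omega\to\Omega$ generated by $v_t$, the solution of $\tfrac{d}{dt}\Phi_t(x) = v_t(\Phi_t(x))$ with $\Phi_0 = \mathrm{id}$ (well-defined for sufficiently regular $v$, and mapping $\Omega$ into itself by the boundary conditions on $V$). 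The transport equation is exactly the statement that $\rho$ is constant along characteristics, so $\rho_t\circ\Phi_t = \rho_0$, i.e.\ $\rho_t = \rho_0\circ\Phi_t^{-1}$. The second ingredient is that $\Phi_t$ is volume-preserving, the Lagrangian face of incompressibility: by Liouville's formula the Jacobian $J_t(x) := \det D\Phi_t(x)$ obeys $\tfrac{d}{dt} J_t = (\nabla\cdot v_t)\circ\Phi_t\; J_t = 0$, so $J_t\equiv J_0 = 1$ and hence $(\Phi_t)_\#\ell = \ell$, equivalently $(\Phi_t^{-1})_\#\ell = \ell$. Combining these with the functoriality of pushforward gives
\[
(\rho_t)_\#\ell \;=\; (\rho_0\circ\Phi_t^{-1})_\#\ell \;=\; (\rho_0)_\#\big((\Phi_t^{-1})_\#\ell\big) \;=\; (\rho_0)_\#\ell,
\]
which is independent of $t$, establishing conservation.

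The main obstacle is one of regularity rather than of the underlying idea: the flow $\Phi_t$ and the identity $\rho_t = \rho_0\circ\Phi_t^{-1}$ are unambiguous only when $v$ is regular enough for the characteristic ODE to have unique flows (classically Lipschitz; for the rough fields relevant to mixing one would invoke the DiPerna--Lions / Ambrosio theory of flows for divergence-free Sobolev vector fields). Since the results here are explicitly formal, I would carry out the above at the level of smooth $v$ and flag the needed regularity.

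As an alternative that sidesteps the flow map, one can argue purely in the Eulerian frame: for a test function $g\in C^1$, using $\dot\rho_t = -v_t\cdot\nabla\rho_t$ and then $\nabla\cdot v_t = 0$,
\[
\frac{d}{dt}\int_\Omega g(\rho_t)\,dx = -\int_\Omega v_t\cdot\nabla\big(g(\rho_t)\big)\,dx = -\int_\Omega \nabla\cdot\big(g(\rho_t)\,v_t\big)\,dx = -\int_{\partial\Omega} g(\rho_t)\,(v_t\cdot\hat n)\,dS = 0,
\]
the boundary integral vanishing under either the no-flux or the no-slip condition on $V$. Since $\int_\Omega g(\rho_t)\,dx = \int_\R g\,d\big((\rho_t)_\#\ell\big)$ for all such $g$, this again shows the measure $(\rho_t)_\#\ell$ is constant in $t$.
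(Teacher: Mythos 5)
Your proposal is correct. Your primary (Lagrangian) argument is essentially the paper's proof with all implicit steps made explicit: the paper's two-sentence proof simply observes that $(\rho_t)_\#\ell(B) = \vol(\rho_t^{-1}(B))$ and that incompressibility, via the divergence theorem, makes the flow volume-preserving, leaving tacit exactly the steps you supply --- the reduction of the continuity equation to a transport equation, the representation $\rho_t = \rho_0 \circ \Phi_t^{-1}$ (so that level sets are transported sets), volume preservation of $\Phi_t$ (which you obtain from Liouville's formula for the Jacobian rather than the divergence theorem, an equivalent route), and functoriality of the pushforward. Your alternative Eulerian argument, however, is a genuinely different route that the paper does not take: it never constructs a flow map, instead testing $(\rho_t)_\#\ell$ against $C^1$ functions and using the divergence theorem together with the boundary conditions on $V$ to kill the boundary flux. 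That version buys you two things: it makes the role of the boundary conditions explicit (the paper's proof never mentions them, though conservation of the volume of sets meeting $\partial\Omega$ does require the flow to stay inside $\Omega$), and it is the form of the argument that survives at low regularity via renormalization theory (DiPerna--Lions/Ambrosio), which is the relevant regime for mixing, where optimal fields need not generate classical flows. Your closing regularity caveat is well placed given that the paper's results are explicitly formal.
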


\begin{proof}
	By definition of the pushforward, this means that $\ell(\rho^{-1}(B)) = \vol(\rho^{-1}(B))$ is conserved for every Borel set $B \subset \R$. By the divergence theorem, the incompressibility constraint $\nabla \cdot v =0$ ensures that the volumes of all sets are conserved.
\end{proof}

\begin{cor}
	A necessary condition for $\rho_a \sim \rho_b$ is that $(\rho_a)_\# \ell = (\rho_b)_\# \ell$.
\end{cor}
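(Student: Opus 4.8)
The plan is to derive the corollary as an immediate consequence of the preceding proposition together with the definition of reachability. First I would unpack the meaning of ``necessary condition'': I must show that whenever $\rho_a \sim \rho_b$, the pushforward measures agree, i.e. $(\rho_a)_\# \ell = (\rho_b)_\# \ell$. Since $\rho_a \sim \rho_b$ means in particular that $\rho_b$ is reachable from $\rho_a$, the definition of reachability furnishes a time horizon $[t_i, t_f]$ together with a pair $(\rho, v)$ satisfying the dynamics \eqref{continuity_eqn}--\eqref{incompressibility_eqn} with $\rho_{t_i} = \rho_a$ and $\rho_{t_f} = \rho_b$.

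The key step is to invoke the proposition, which asserts that $\rho_\# \ell$ is conserved along any such trajectory. Concretely, this says that the map $t \mapsto (\rho_t)_\# \ell$ is constant on $[t_i, t_f]$. Evaluating at the two endpoints then yields
\[
	(\rho_a)_\# \ell ~=~ (\rho_{t_i})_\# \ell ~=~ (\rho_{t_f})_\# \ell ~=~ (\rho_b)_\# \ell ,
\]
which is exactly the claimed necessary condition. Equivalently, reading the contrapositive, if $(\rho_a)_\# \ell \neq (\rho_b)_\# \ell$ then no admissible trajectory can join $\rho_a$ to $\rho_b$, and hence $\rho_a \not\sim \rho_b$.

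I expect no substantive obstacle here, since the result is essentially a restatement of the conservation law already established in the proposition. The only point deserving minor care is that reachability need hold in just a single direction for the argument to go through: I do not require the full strength of the equivalence relation, only that $\rho_a \sim \rho_b$ entails the existence of one connecting trajectory. A secondary subtlety, which would become relevant only once the function spaces and the weak formulation of \eqref{continuity_eqn}--\eqref{incompressibility_eqn} are pinned down, is to confirm that ``conserved quantity'' is interpreted strongly enough that the pushforward is genuinely constant in $t$ (and thus equal at the prescribed endpoints); for trajectories that are continuous into a suitable topology on measures this holds automatically.
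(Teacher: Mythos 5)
Your proof is correct and is exactly the argument the paper intends: the corollary is stated without proof precisely because it follows immediately from the proposition, by taking any connecting trajectory guaranteed by reachability and noting that the conserved quantity $(\rho_t)_\# \ell$ must agree at the two endpoints. Your remarks on the one-directional use of reachability and on the interpretation of ``conserved'' are sensible but do not change the substance.
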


\begin{proof}
	Follows directly since $\rho_\# \ell$ is conserved.
\end{proof}

The intuition for the above condition is as follows. The object $\rho_\# \ell$ is the pushforward of Lebesgue measure by the density function $\rho$ -- it is a measure on $\R$ which describes the relative volumes of the domain $\Omega$ which belong to level sets $\rho(\cdot) = r$. Under incompressibility, we cannot change the values that $\rho$ takes on, and thus we cannot change the levels themselves or the relative volumes of the level sets -- only move them around in the state space.

We also note that while this condition is necessary, it is not in general sufficient. Sufficiency seems to depend delicately on the boundary conditions, on the regularity of allowed flow fields, and on the geometry of the level sets.

Of course, knowing that two densities are mutually reachable is useless unless one can actually construct curves that connect them.
\begin{prob}[State-Transfer]
	Given initial and final states $\rho_i \sim \rho_f$, construct a input-trajectory pair $(\rho,v)$ such that $\rho_0 = \rho_i$, $\rho_1 = \rho_f$, and $(\rho,v)$ satisfy \eqref{continuity_eqn}-\eqref{incompressibility_eqn} on $[0,1]$.
\end{prob}

In general, this seems to be a similarly challenging problem to that of reachability. One idea is that perhaps by imposing more structure on the problem, it might actually become easier to solve. For example, perhaps it is easier to find optimal trajectories than it is to find arbitrary ones.

\begin{prob}[Optimal State-Transfer]
	Given initial and final states $\rho_i \sim \rho_f$ and a norm $\| \cdot \| : V \to R$, solve
	\begin{equation} \label{transfer_prob}
		\inf_{\rho, v} \, \int_0^1 \| v_t \|^2 \, dt \qquad \text{s.t.} \qquad (\star)~ \begin{cases}
			\partial_t \rho = - \nabla \cdot (\rho v) \\
			0 = \nabla \cdot v \\
			\rho_0 = \rho_i \\
			\rho_1 = \rho_f
		\end{cases}
	\end{equation}
	We will refer to the value of this problem as the \emph{transfer cost} from $\rho_i$ to $\rho_f$ and denote it as $\T(\rho_i, \rho_f)$.
\end{prob}

We will mainly be interested in the case where the norm $\| \cdot \|$ arises from an inner product $\| v_t \|^2 = \lb v_t , v_t \rb$, as this will allow us to put a Riemannian structure on the set of reachable states $[\rho_i] \subset P$. We will discuss this in more depth later. For now, it is enough to show that the above problem induces a metric on the reachable set. More precisely, we first define
\begin{equation} \label{metric_eqn}
	m(\rho_i,\rho_f) ~:=~ \inf_{\rho,v} \, \int_0^1 \| v_t \| \, dt \qquad \text{s.t.} \qquad (\star) ,
\end{equation}
where $(\star)$ denotes the constraints in \eqref{transfer_prob}. We then show that $m$ forms a metric on the reachable set $[\rho_i]$. We then show that the square root of the transfer cost $\T^{1/2}$ coincides with $m$. These properties follow directly from the symmetry properties of the system \eqref{continuity_eqn}-\eqref{incompressibility_eqn} and of the norm $\| \cdot \|$.

\begin{prop} \label{metric_prop}
	The quantity $m$ defined in \eqref{metric_eqn} forms a metric on the equivalence class of reachable states $[\rho_i]$. That is,
	\begin{enumerate}
		\item $m(\rho_a, \rho_b) \geq 0$, with equality iff $\rho_a=\rho_b$,
		\item $m(\rho_a,\rho_b) = m(\rho_b,\rho_a)$,
		\item $m(\rho_a,\rho_b) + m(\rho_b,\rho_c) \geq m(\rho_a,\rho_c)$.
	\end{enumerate}
\end{prop}

\begin{proof}
	(1) Positive definiteness follows directly from the positive definiteness of $\| \cdot \|$. To show equality, observe that $m(\rho_a,\rho_b) = 0$ implies that $\| v_t \| = 0$ for a.e-$t$ which implies that $v_t(x) = 0$ for all a.e.-$x,t$, which implies that $\partial_t \rho(x) = 0$ for all a.e.-$x,t$, and in turn, that $\rho$ is constant. Taking $v \equiv 0$ achieves equality in the converse.
	
	(2) First, we define the quantity
	\begin{equation} \label{lv_orig_eq}
		l(v) ~:=~ \int_0^1 \| v_t \| \, dt ,
	\end{equation}
	so that $m = \inf \{ \l(v) ~ \text{s.t.} ~ (\star) \}$. By time-reversibility (Lemma \ref{time_rev_lem}) and homogeneity of $\| \cdot \|$, we see that if $\rho_b$ is reachable from $\rho_a$ with cost $l(v)$ using velocity field $v_t$, then $\rho_a$ is reachable from $\rho_b$ with cost $l(v)$ using velocity field $-v_{1-t}$. The result follows by passing to the infima.
	
	(3) First, we establish that the quantity $l(v)$ defined above is invariant under time-reparameterization. That is, let $\rho , v , \sigma, \tilde{\rho}, \tilde{v}$ be defined as in Lemma \ref{time_reparam_lem}. Then it holds by change-of-variables that
	\begin{multline}
		l(\tilde{v}) ~:=~ \int_{t_i'}^{t_f'} \| \tilde{v}_t \| \, dt ~:=~ \int_{t_i'}^{t_f'} \| \dot{\sigma}(t) v(\sigma(t)) \| \, dt ~=~ \\
		\int_{t_i'}^{t_f'} \dot{\sigma}(t) \| v(\sigma(t)) \| \, dt ~=~ \int_{t_i}^{t_f} \| v_\tau \| \, d\tau ~=:~ l(v) .
	\end{multline}
	This allows us to extend the definition of $l(v)$ to arbitrary time intervals. It is then immediate that $l(v)$ adds under the concatenation operation, i.e., that $(\rho_3,v_3) = \text{concat}((\rho_1,v_1),(\rho_2,v_2))$ implies $l(v_3) = l(v_1) + l(v_2)$. The result follows by contradiction: if the inequality were violated for some $\rho_a, \rho_b, \rho_c$, then we could find velocity fields $v_1$ taking $\rho_a$ to $\rho_b$ and $v_2$ taking $\rho_b$ to $\rho_c$ with $l(v_1)$ $\epsilon$-close to $m(\rho_a,\rho_b)$ and similarly for $v_2$. Concatenation produces a $v_3$ with $l(v_3) = l(v_1) + l(v_2) < m(\rho_a,\rho_c)$, a clear contradiction since $m(\rho_a,\rho_c)$ is equal to the infimum over all possible $l(v_3)$.
\end{proof}

Now, we show that the square root of the transfer cost coincides with $m$. Actually, we show a bit more.

\begin{lem} \label{p_action_lem}
	For all $T \in (0,\infty)$, $p \in [1,\infty)$, it holds that
	\begin{equation} \label{p_action_eq}
		m^p(\rho_i,\rho_f) ~=~ \inf_{\rho,v} ~ T^{p-1} \int_0^T \| v_t \|^p \, dt \quad \text{s.t.} \quad (\star) .
	\end{equation}
\end{lem}

\begin{proof}
	Let $(\rho,v)$ be a pair satisfying $(\star)$ in the right-hand side of Equation \eqref{p_action_eq}.
	
	(1) First, we show that the pair admits a reparameterization where $\| v_t \| = l(v) / T = \text{constant}$. Define
	\begin{equation}
		\sigma(t) ~:=~ \frac{T}{l(v)} \int_0^t \| v_s \| \, ds .
	\end{equation}
	Supposing $\| v_t \|$ is continuous, $\sigma$ is continuously differentiable and monotone nondecreasing from $0$ to $T$. Therefore $\sigma$ admits a right inverse $\sigma^{-R}$, and the inverse function theorem applies in an a.e.-sense:
	\begin{equation}
		\dot{(\sigma^{-R})}(t) ~=~ \frac{1}{\dot{\sigma}(\sigma^{-R}(t))} ~=~ \frac{l(v)}{T \| v(\sigma^{-R}(t)) \|} \quad \text{a.e.}
	\end{equation}
	Define $(\tilde{\rho},\tilde{v})$ by reparamterization by $\sigma^{-R}$. Then $\tilde{v}(t) := \dot{(\sigma^{-R})}(t) \, v(\sigma^{-R}(t))$ is a velocity field on $[0,T]$, and
	\begin{multline}
		\left\| \tilde{v}_t \right\| ~=~ \left\| \dot{(\sigma^{-R})}(t) \, v(\sigma^{-R}(t)) \right\| ~=~ \\
		\left\| \frac{l(v)}{T \| v(\sigma^{-R}(t)) \|} \, v(\sigma^{-R}(t)) \right\|  ~=~ \frac{l(v)}{T} \left\| \frac{ v(\sigma^{-R}(t)) }{\| v(\sigma^{-R}(t)) \|} \right\|  \\
		~=~ \frac{l(v)}{T} ~=~ \text{constant} \quad \text{a.e.}
	\end{multline}
	
	(2) Now, we show that optimal $v$ in the right-hand side of \eqref{p_action_eq} can be taken to be of this form. This follows from Jensen's inequality, as the convexity of $\| \cdot \|^p$, $p \in [1,\infty)$ implies that
	\begin{multline} \label{jensen_eq}
		\hspace{-3mm} \int_0^T \| v_t \|^p \, dt ~=~ T \int_0^T \| v_t \|^p \, \frac{dt}{T} ~\geq~ T \lp \int_0^T \| v_t \| \, \frac{dt}{T}  \rp^p \\
		~=~ T \left( \frac{l(v)}{T} \right)^p ~=~ T^{1-p} \, l^p(v) ,
	\end{multline}
	with equality if $\| v_t \| = l(v) / T = \text{constant}$.
	
	(3) The result follows from scaling Equation \eqref{jensen_eq} by $T^{p-1}$ and passing to the infimum, recognizing that $m = \inf \{ \l(v) ~ \text{s.t.} ~ (\star) \}$ (see proof of Lemma \ref{metric_prop}).
\end{proof}

In particular, taking $T = 1$ and $p = 2$ above gives $m = \T^{1/2}$. In other words, the metric $m$ can be characterized as the square root of the transfer cost.

Now, observe that the transfer cost of the optimal state-transfer problem \eqref{transfer_prob} is exactly the same as the control energy portion of the cost for the mixing problem \eqref{mixing_prob} when $\rho_f = \rho_T$ and $T = 1$. This allows us to break the infimization in the mixing problem over $\rho_T = \rho_1$ as follows
\begin{align}
	&\inf_{\rho, v} \int_0^T \| v_t \|^2 \, dt  +  \alpha d^2(\rho_T,\rho_*) ~~ \text{s.t.} ~ \begin{cases}
		\partial_t \rho = - \nabla \cdot (\rho v) \\
		0 = \nabla \cdot v \\
		\rho_0 = \rho_i
	\end{cases}  \hspace{-7mm} \\
	&=~ \inf_{\rho_f}\, \frac{1}{T} \Bigg[ \inf_{\rho, v} \, \int_0^1 \| v_t \|^2 \, dt ~~ \text{s.t.} ~~ (\star)  \Bigg]  \, \nonumber \\
	& \qquad \qquad \qquad \qquad + \,  \alpha d^2(\rho_f,\rho_*) ~~ \text{s.t.} ~~ \rho_f \in [\rho_i] \\
	&=~ \inf_{\rho_f}\, \frac{1}{T} \, m^2(\rho_i,\rho_f)  \, + \, \alpha d^2(\rho_f,\rho_*) ~~ \text{s.t.} ~~ \rho_f \in [\rho_i] . \label{state_selection_eq}
\end{align}
Note that the factor $1/T$ above comes from rescaling the problem into the time interval $[0,1]$. With the above argument, we have just proven the following.

\begin{thm}
	Any optimal solution to the mixing problem \eqref{mixing_prob} is given by a (linearly time-scaled) optimal solution to the state-transfer problem \eqref{transfer_prob} where $\rho_f$ is determined from the solution to the final state selection problem \eqref{state_selection_eq}.
\end{thm}

\begin{proof}
	Follows directly as outlined above.
\end{proof}

We have now broken the mixing problem down into two component subproblems: selection of a final state $\rho_f$ according to \eqref{state_selection_eq}, and the optimal state-transfer problem \eqref{transfer_prob} from $\rho_i$ to $\rho_f$. In principle, a suitably well-mixed final state $\rho_f$ can be found by solving \eqref{state_selection_eq} for a sufficiently large value of $\alpha$. In practice, this may be computationally infeasible. We will have more to say about this in follow-up work.

Now, we focus on solving the optimal state-transfer problem \eqref{transfer_prob}. We will show that this problem can be further broken down into selection of a path in the space of reachable densities $\rho: [0,1] \to [\rho_i] \subset P$ and selection of a velocity field $v$ which generates that path. The key step here is to use the \emph{metric derivative}.

\begin{defn}[Metric Derivative]
	Consider the space of reachable densities $[\rho_i] \subset P$ endowed with the metric $m$. Given an absolutely continuous curve $\rho:[0,1] \to ([\rho_i],m)$, we define its \emph{metric derivative}
	\begin{equation} \label{metric_deriv_def}
		|\dot{\rho}_t | ~:=~ \lim_{h\to 0} \frac{m(\rho_{t+h},\rho_t)}{|h|}  .
	\end{equation}
	The definition of absolute continuity ensures that the metric derivative is well-defined for a.e. $t \in [0,1]$.
\end{defn}

The metric derivative can be related back to the velocity field $v$ as follows.

\begin{lem}
	Let $\rho: [0,1] \to ([\rho_i],m)$ be absolutely continuous. Then its metric derivative satisfies for a.e. $t \in [0,1]$
	\begin{equation} \label{metric_derivative_lem}
		|\dot{\rho}_t| ~=~ \inf_{v_t} \| v_t \|  \quad \text{s.t.} \quad (*) ~ \begin{cases}
			\partial_t \rho_t = - \nabla \cdot (\rho_t v_t) \\
			0 = \nabla \cdot v_t
		\end{cases} 
	\end{equation}
\end{lem}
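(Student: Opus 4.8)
The plan is to prove the two inequalities separately, treating the right-hand side of \eqref{metric_derivative_lem} as the infinitesimal norm of the velocity $\dot{\rho}_t$ and the metric derivative on the left as the speed of the curve measured in $m$. Writing $J_t := \inf_{v_t}\{\|v_t\| : (*)\}$ for the right-hand side, I would first record that the $p=1$ case of Lemma \ref{p_action_lem} (where the prefactor $T^{p-1}$ is unity, so the horizon is irrelevant) gives $m(\rho_a,\rho_b) = \inf_{\rho,v}\int_0^1\|v_s\|\,ds$ subject to $(\star)$; minimizing over $v$ pointwise in $s$ rewrites this as $m(\rho_a,\rho_b) = \inf_{\rho}\int_0^1 J_s\,ds$, exhibiting $m$ as the length distance induced by the infinitesimal norm $J$. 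The pointwise minimization requires a measurable-selection argument to reassemble the instantaneous minimizers into a single admissible field, which I would isolate as a preliminary lemma.

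For the bound $|\dot{\rho}_t| \le J_t$, I would use the given curve itself as a competitor. Fixing a generating field $v$ with $\int_0^1\|v_s\|\,ds<\infty$ and restricting $(\rho,v)$ to $[t,t+h]$ produces an admissible pair for the state transfer from $\rho_t$ to $\rho_{t+h}$, so that $m(\rho_{t+h},\rho_t)\le\int_t^{t+h}\|v_s\|\,ds$. Dividing by $h$ and letting $h\to0$, the Lebesgue differentiation theorem gives $|\dot{\rho}_t|\le\|v_t\|$ for a.e.\ $t$; taking the infimum over admissible generating fields (again via measurable selection) yields $|\dot{\rho}_t|\le J_t$.

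The reverse inequality $J_t \le |\dot{\rho}_t|$ is the substantive direction, and where I expect the main difficulty. For each small $h>0$, the $p=2$ case of Lemma \ref{p_action_lem} furnishes a near-optimal connecting pair $(\rho^h,v^h)$ on $[t,t+h]$ with $\tfrac1h\int_t^{t+h}\|v^h_s\|^2\,ds$ close to $\big(m(\rho_{t+h},\rho_t)/h\big)^2$. Passing to the flux variable $w^h_s := \rho^h_s v^h_s$ linearizes the continuity constraint: integrating $\dot{\rho}^h_s=-\nabla\cdot w^h_s$ over $[t,t+h]$ gives $\tfrac{1}{h}(\rho_{t+h}-\rho_t)=-\nabla\cdot\big(\tfrac1h\int_t^{t+h} w^h_s\,ds\big)$, while the energy bound controls the averaged fluxes. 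As $h\to0$ the left side converges to $\dot{\rho}_t$, and — using $\rho^h_s\to\rho_t$ — a weak limit of the averaged fluxes identifies an instantaneous $v_t$ admissible for $(*)$ with $\|v_t\|\le|\dot{\rho}_t|$ by joint lower semicontinuity of the action in the $(\rho,\rho v)$ variables, whence $J_t\le|\dot{\rho}_t|$.

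The main obstacle is precisely this weak-compactness step: it presupposes enough structure on $V$ and $\|\cdot\|$ — reflexivity or weak-$*$ compactness of norm-bounded sets, weak lower semicontinuity of the norm, and closedness under weak limits of the distributional constraints $\dot{\rho}_t=-\nabla\cdot(\rho_t v_t)$ and $\nabla\cdot v_t=0$ — to guarantee that the limiting field exists, stays admissible, and does not increase the norm. Since the paper deliberately leaves $V$ and $\|\cdot\|$ unspecified, I would phrase these as the standing hypotheses under which the formal argument becomes rigorous, noting that they hold in the motivating examples (the $L^2$, $H^1$, and $H^2$-type norms on divergence-free fields). A secondary technical point, common to both inequalities, is the measurable selection needed to pass between the pointwise-in-time infimum defining $J_t$ and the infimum over fields on the whole interval.
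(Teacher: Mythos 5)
The paper never actually proves this lemma --- the proof is explicitly deferred to ``the extended version'' --- so there is no argument of record to compare yours against; what follows is an assessment on the merits. Your overall architecture is the standard and correct one: it mirrors the characterization of absolutely continuous curves in Wasserstein space (Ambrosio--Gigli--Savar\'e, Thm.~8.3.1) and the Benamou--Brenier analysis, transplanted to the incompressible setting, and your list of standing hypotheses (weak compactness of norm-bounded sets, weak lower semicontinuity, weak closedness of the constraints, measurable selection) is exactly what a rigorous version would require. This is consistent with the paper's own caveat that all of its results are formal until $P$, $V$, $\|\cdot\|$, $d$ are specified.

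Two places where your outline needs to be sharper. First, in the direction $|\dot{\rho}_t|\le J_t$: your restriction/Lebesgue-differentiation argument bounds $|\dot{\rho}_t|$ by $\|v_t\|$ only for fields $v$ that generate the curve on the \emph{whole interval}, whereas $J_t$ is an infimum over the strictly larger class of instantaneous fields satisfying $(*)$ at the single time $t$; a priori this gives $|\dot{\rho}_t|\le\inf_{\mathrm{generating}}\|v_t\|$, which is $\ge J_t$, not $\le J_t$. So the measurable-selection step you mention parenthetically is not bookkeeping but the crux of this direction: one must patch instantaneous $\epsilon$-minimizers into a single measurable generating field, apply the differentiation argument to that field, and send $\epsilon\to0$ along a countable sequence. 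Second, and more substantively, in the direction $J_t\le|\dot{\rho}_t|$ the energy bound from Lemma \ref{p_action_lem} controls the time-averaged \emph{velocities} (via Jensen), not the averaged \emph{fluxes} $\tfrac1h\int_t^{t+h}\rho^h_s v^h_s\,ds$, and identifying the weak limit of those fluxes as $\rho_t v_t$ requires the intermediate densities $\rho^h_s$ to converge to $\rho_t$ strongly enough to pass products to the limit. Near-optimality of the transfers only gives $m(\rho^h_s,\rho_t)\to0$, and in the paper's abstract setting there is no reason the metric $m$ controls any topology in which products converge. In the classical $L^2(\rho)$ case this is repaired by the joint convexity and lower semicontinuity of the Benamou--Brenier functional $(\rho,w)\mapsto\int|w|^2/\rho$; for general $K$-norms you should either impose that as an explicit hypothesis or replace this step by the duality route --- estimate $|\frac{d}{dt}\langle\phi,\rho_t\rangle|$ for test functions $\phi$ in terms of $|\dot{\rho}_t|$ and a dual norm, then invoke Riesz representation in the inner-product structure the paper already assumes on $V$ --- which avoids intermediate densities entirely and is how the Wasserstein analogue is usually proved. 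A mitigating observation specific to this paper: incompressibility makes the $\rho^h_s$ rearrangements of $\rho_t$ (the pushforward-conservation proposition), so their $L^2$ norms are conserved, and weak convergence plus norm conservation would upgrade to strong $L^2$ convergence; this is a plausible way to close your gap, but it must be argued rather than assumed.
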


\begin{proof}
	The main idea of the proof is to recognize that the metric $m$ is induced by the following length structure:
	\begin{align}
		L(\rho) ~&:=~ \inf_v \, l(v) ~~ \text{s.t.} ~~ (*) ~ \forall t \\
		~&:=~ \inf_v \int_0^1 \| v_t \| \, dt ~~ \text{s.t.} ~~ (*) ~ \forall t \\
		~& \phantom{:} =~ \int_0^1 \left[ \inf_{v_t} \| v_t \| ~~ \text{s.t.} ~~ (*) \right] \, dt . \label{inf_v_eq}
	\end{align}
	That is, that
	\begin{align}
		m(\rho_a , \rho_b) ~&:=~ \inf_{\rho,v} \, \int_0^1 \| v_t \| \, dt ~~ \text{s.t.} ~~ (\star) \\
		~& \phantom{:} =~ \inf_\rho L(\rho) ~~ \text{s.t.} ~~ \rho_0 = \rho_a , ~ \rho_1 = \rho_b .
	\end{align}
	The fact that $L$ is a length structure follows from the fact that $l$ is a length structure, which follows from the properties of $l$ established in the proof of Proposition \ref{metric_prop}. The equality in \eqref{inf_v_eq} follows from the fact that $v$ can be infimized pointwise-in-time, which follows from the fact that the constraints $(*)$ are pointwise-in-time.
	
	With the metric $m$ thus defined, we can then define the \emph{intrinsic length} of an absolutely continuous curve $\rho$ by
	\begin{equation}
		L_m(\rho) ~:=~ \int_0^1 | \dot{\rho}_t | \, dt .
	\end{equation}
	It is a foundational result in metric geometry (see, e.g., \cite[Theorem 2.4.3]{Burago2001} and surrounding discussion) that if $L$ is lower-semicontinuous with respect to the uniform convergence on $\rho$, then the intrinsic length $L_m$ coincides with the length $L$ as defined via the length structure. Supposing this to be true, then
	\begin{equation}
		\int_0^1 | \dot{\rho}_t | \, dt  ~=~ \int_0^1 \left[ \inf_{v_t} \| v_t \| ~~ \text{s.t.} ~~ (*) \right] \, dt
	\end{equation}
	for all absolutely continuous curves $\rho$, so that by the additivity of $L$,$L_m$ under concatenation, we must have
	\begin{equation}
		|\dot{\rho}_t| ~=~ \inf_{v_t} \| v_t \|  \quad \text{s.t.} \quad (*)
	\end{equation}
	for a.e. $t \in [0,1]$.
\end{proof}

Note that in equations \eqref{metric_deriv_def} and \eqref{metric_derivative_lem}, the curve $\rho$ is considered to be fixed. In particular, $(*)$ is considered as a set of static constraints for $v$ at each $t$ (depending on $\rho$) rather than as a dynamic constraint. We can then break the optimal state-transfer problem \eqref{transfer_prob} over paths $\rho$ as follows.
\begin{align}
	&\inf_{\rho, v} \, \int_0^1 \| v_t \|^2 \, dt \qquad \text{s.t.} \qquad (\star) \\
	&=~ \inf_{\rho}  \int_0^1 \left[ \inf_{v_t} \| v_t \|^2 ~~ \text{s.t.} ~~ (*) \right] \, dt
	~~ \text{s.t.} ~ \begin{cases}
		\rho \in [\rho_i] ~ \text{cts} \\
		\rho_0 = \rho_i \\
		\rho_1 = \rho_f
	\end{cases} \hspace{-7mm} \\
	&=~ \inf_{\rho}  \int_0^1 | \dot{\rho_t} |^2 \, dt
	~~ \text{s.t.} ~~ \begin{cases}
		\rho \in [\rho_i] ~ \text{cts} \\
		\rho_0 = \rho_i \\
		\rho_1 = \rho_f
	\end{cases} .
\end{align}

Recall from the proof of Lemma \ref{p_action_lem} that optimal paths must have constant $\| v_t \|$, thus constant $|\dot{\rho}|$, and thus
\begin{equation}
	\int_0^1 | \dot{\rho}_t |^2 \, dt ~=~ \left( \int_0^1 |\dot{\rho}_t| \, dt \right)^2 ~=~ \text{length}^2(\rho) .
\end{equation}

We have therefore proven the following.

\begin{thm}
	Any optimal solution to the state-transfer problem \eqref{transfer_prob} is given by a pair $(\rho,v)$, where $\rho$ is a minimum-length continuous path in the reachable set $[\rho_i] \subset P$ with endpoints $\rho_i$ and $\rho_f$ and speed $|\dot{\rho}| \equiv m(\rho_i,\rho_f) = \text{constant}$, and $v$ minimizes at each time $t$ the left-hand side of \eqref{metric_derivative_lem}.
\end{thm}

\begin{proof}
	Follows directly as outlined above.
\end{proof}

\begin{rmk}
	The minimum-length curves connecting each $\rho_i$ to a given $\rho_f$ can also be characterized as steepest-descent curves (i.e. gradient flows) of the distance function $m(\cdot,\rho_f)$. If such gradients can be defined uniquely, this provides a feedback controller $k: [\rho_i] \to V$ achieving optimal state transfer. This is classical, and connects the metric space picture with optimal control theory and the HJB equation.
\end{rmk}

We have now effectively decomposed our problem into three component subproblems: 1) selection of a final state $\rho_f$, 2) selection of a minimum-length continuous path which connects $\rho_i$ to $\rho_f$, and 3) selection of a minimum-norm velocity field $v$ which generates that path.

At this point, however, we still have not said anything constructive about this problem. Given the metric $m$, we could compute minimum-length paths by gradient descent, and similarly, if we knew the minimum-length paths, then we could integrate to find $m$. However, at the moment, we have neither. The next subsection takes the first steps towards more constructive solutions.

\subsection{Riemannian Structure, Necessary Conditions, Geodesics}

Here, we describe how the reachable space $([\rho_i],m)$ can formally be viewed as an infinite-dimensional Riemannian manifold. We also present necessary conditions for optimality for the problems posed in the previous section, and give interpretations in terms of this Riemannian framework. This section is highly inspired by Otto's work on the geometry of the 2-Wasserstein space from optimal transport \cite{Otto2001}.

To describe a Riemannian manifold, we need three things:
\begin{enumerate}
	\item a smooth manifold $M$,
	\item a description of the tangent spaces $T_x M$, 
	\item an inner product $\lb \cdot , \cdot \rb_x$ on each tangent space.
\end{enumerate}
In our setting, we identify each of these by the following:
\begin{enumerate}
	\item the set of reachable states $[\rho_i] \subset P$,
	\item the set of perturbations $\tau$ at a given $\rho$ satisfying
	\begin{equation}
		\tau = - \nabla \cdot (\rho v) \quad \text{for some} \quad v \quad \text{s.t.} \quad \nabla \cdot v = 0 ,
	\end{equation}
	\item the inner product
	\begin{equation} \label{inner_prod_eq}
		\hspace{-2mm}
		\lb \tau_1 , \tau_2 \rb_\rho ~:=~ \inf_{v_1,v_2} \,  \lb v_1 , v_2 \rb_V \quad \text{s.t.} \quad 
		\begin{cases}
			\tau_1 = - \nabla \cdot (\rho v_1) \\
			\tau_2 = - \nabla \cdot (\rho v_2) \\
			0 = \nabla \cdot v_1 \\
			0 = \nabla \cdot v_2
		\end{cases}
	\end{equation}
\end{enumerate}
In the above, $\lb \cdot , \cdot \rb_V$ is the inner product on $V$ that was assumed to induce the norm $\| \cdot \|$ in the previous section. We can see that this indeed recovers the metric as defined earlier, since by taking $\dot{\rho} = \partial_t \rho = \tau_1 = \tau_2$ in \eqref{inner_prod_eq}, we obtain exactly the characterization \eqref{metric_derivative_lem} of the metric derivative.

At this point, we could take a route parallel to that of Otto \cite{Otto2001}, and show that this Riemannian structure is induced by submersion on Arnold's group of volume-preserving diffeomorphisms on $\Omega$ \cite{Arnold1966}
\begin{equation}
	\text{Sdiff}(\Omega) ~:=~ \{ \phi \in \text{diff}(\Omega) ~:~  \det D \phi \equiv 1 \} ,
\end{equation}
with an appropriate metric inherited from $\lb \cdot , \cdot \rb_V$. We will say more about this in follow-up work. Here, our main objective is to build towards some constructive results concerning flow fields for efficient mixing, and to interpret these results in terms of this Riemannian structure.

First, we examine necessary conditions for optimality for the \emph{velocity field selection problem}, restated as follows:
\begin{equation} \label{vel_selection_eqn}
	\inf_{v} \| v \|^2 \qquad \text{s.t.} \qquad \begin{cases}
		\tau = - \nabla \cdot (\rho v) \\
		0 = \nabla \cdot v
	\end{cases} .
\end{equation}
Recall that $\tau$ and $\rho$ are assumed to be given in the above problem.
We also assume that the norm $\| \cdot \|$ is given by
\begin{align} \label{norm_K_eqn}
	\| \cdot \|_V^2 ~=~ \| K (\cdot) \|_{L^2} ~&=~ \langle K(\cdot),K(\cdot) \rangle_{L^2} \nonumber \\
	~&=~  \langle K^*K(\cdot),(\cdot) \rangle_{V \hookrightarrow L^2}
\end{align}
for some linear operator $K: V \to L^2$. (This includes, for example, the $L^2$ norm ($K=I$), $H^1$ norm ($K = (I - \Delta)^{1/2}$), enstrophy norm ($K = \nabla$), and palenstrophy norm ($K = \Delta$).)

\begin{thm} \label{vel_sel_necc_cond_thm}
	For $v$ to be optimal in \eqref{vel_selection_eqn} under assumption \eqref{norm_K_eqn}, it is necessary that there exist $\lambda,\gamma : \Omega \to \R$ such that
	\begin{align}
		\tau &~=~ - \nabla \cdot (\rho v) \\
		0 &~=~ \nabla \cdot v \\
		0 &~=~ K^*K v - \rho \nabla \lambda - \nabla \gamma .
	\end{align}
\end{thm}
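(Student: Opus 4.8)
The plan is to read \eqref{vel_selection_eqn} as an equality-constrained quadratic minimization over $v \in V$ and to obtain the stated system as the stationarity (Euler--Lagrange) conditions of an associated Lagrangian, with $\lambda$ and $\gamma$ serving as the multipliers for the two pointwise constraints $\tau = -\nabla\cdot(\rho v)$ and $\nabla\cdot v = 0$. Concretely, I would introduce
\[
L(v,\lambda,\gamma) ~=~ \tfrac{1}{2}\langle K^*K v, v\rangle + \int_\Omega \lambda\,\big(\tau + \nabla\cdot(\rho v)\big)\,dx + \int_\Omega \gamma\,(\nabla\cdot v)\,dx,
\]
where the factor $\tfrac12$ merely rescales the multipliers and is chosen so that differentiating the quadratic energy produces $K^*Kv$ rather than $2K^*Kv$, matching the statement. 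Stationarity of $L$ in $\lambda$ and $\gamma$ simply reinstates the two constraints, so all the content lies in the variation with respect to $v$.

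Next I would compute the first variation in $v$. For an arbitrary perturbation $\delta v \in V$, the energy term contributes $\langle K^*K v, \delta v\rangle$ (using assumption \eqref{norm_K_eqn}), the $\lambda$-term contributes $\int_\Omega \lambda\,\nabla\cdot(\rho\,\delta v)\,dx$, and the $\gamma$-term contributes $\int_\Omega \gamma\,\nabla\cdot\delta v\,dx$. Integrating each of the latter two by parts moves the divergence onto $\lambda$ and $\gamma$, producing the bulk terms $-\int_\Omega \rho\,\nabla\lambda\cdot\delta v\,dx$ and $-\int_\Omega \nabla\gamma\cdot\delta v\,dx$ plus boundary integrals carrying the factor $\delta v\cdot\hat n$. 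These boundary integrals vanish because every admissible $\delta v \in V$ satisfies the imposed no-flux (or no-slip) boundary condition on $\partial\Omega$. Collecting terms leaves
\[
\int_\Omega \big(K^*K v - \rho\nabla\lambda - \nabla\gamma\big)\cdot\delta v\,dx ~=~ 0 \qquad \text{for all } \delta v \in V.
\]

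To finish, I would invoke the fundamental lemma of the calculus of variations. Because the constraints have been relaxed into the multiplier terms, the perturbation $\delta v$ ranges freely over $V$ with no residual restriction; hence the integrand must vanish, yielding $0 = K^*K v - \rho\nabla\lambda - \nabla\gamma$, which is precisely the third stated equation, while the first two are the reinstated constraints.

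The genuinely delicate points here are functional-analytic rather than computational, which is consistent with the paper's declared formal stance. The main obstacle is justifying the \emph{existence} of the multipliers $\lambda,\gamma$: this requires a constraint qualification, essentially surjectivity (with closed range) of the linearized constraint map $v \mapsto \big(-\nabla\cdot(\rho v),\,\nabla\cdot v\big)$ onto the appropriate target space, which amounts to an inf-sup / Ladyzhenskaya--Babuška--Brezzi-type condition together with enough regularity of $\rho$. A secondary subtlety is that $K^*K$ may be unbounded (e.g. $K=\Delta$ for the palenstrophy norm), so the stationarity equation and the integrations by parts must be interpreted weakly on the correct domains. I would flag both as the points where the specific choices of $V$, $\|\cdot\|$, and boundary conditions must be pinned down before the argument becomes fully rigorous.
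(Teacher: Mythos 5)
The paper defers its proof of this theorem to an extended version, so there is no paper proof to compare against; judged on its own, your Lagrange-multiplier derivation is correct as a formal argument and is clearly the intended one, since the theorem's conditions are exactly the stationarity conditions of the Lagrangian you construct, with $\lambda$ and $\gamma$ multiplying the transport and incompressibility constraints respectively. Your integration by parts is right (the boundary terms die against the no-flux/no-slip conditions encoded in $V$), and the two caveats you flag --- a constraint qualification guaranteeing existence of the multipliers, and weak interpretation of $K^*K v$ when $K$ is an unbounded differential operator --- are precisely the gaps that would need closing, consistent with the paper's own disclaimer that its results are formal pending a concrete choice of $P$, $V$, $\|\cdot\|$, and boundary conditions.
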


\begin{proof}
	We solve this problem by the method of Lagrange multipliers. We first form the Lagrangian
	\begin{equation}
		L(v,\lambda,\gamma) \, := \, \frac{1}{2} \langle K v , K v \rangle + \langle \lambda , \tau + \nabla \cdot (\rho v) \rangle + \langle \gamma , \nabla \cdot v \rangle .
	\end{equation}
	Note that we have scaled the objective in \eqref{vel_selection_eqn} by a factor of $1/2$ -- this simplifies the analysis but does not alter the resulting optimality conditions. We have also substituted the form \eqref{norm_K_eqn} for $\| v \|^2$. We now take variations with respect to each parameter, consider the linear parts, and set them equal to zero. This yields the equations
	\begin{align}
		0 ~&=~ \lb \tilde{\lambda} , \tau + \nabla \cdot (\rho v) \rb \\
		0 ~&=~ \lb \tilde{\gamma} , \nabla \cdot v \rb \\
		0 ~&=~ \lb K^*K v , \tilde{v} \rb + \lb \lambda , \nabla \cdot (\rho \tilde{v}) \rb + \lb \gamma , \nabla \cdot \tilde{v} \rb .
	\end{align}
	Since these equations must hold for all perturbations $\tilde{\lambda}, \tilde{\gamma}, \tilde{v}$, we can see that the right hand parts of the inner products in the first two equations must be equal to zero, and we recover our original constraints: $\tau = -\nabla \cdot (\rho v)$ and $\nabla \cdot v = 0$. In the third equation, we first use the divergence theorem to integrate by parts and then combine terms to obtain
	\begin{equation}
		0 ~=~ \lb K^*K v - \rho \nabla \lambda - \nabla \gamma , \tilde{v} \rb .
	\end{equation}
	By the same reasoning, since this equation must hold for all perturbations $\tilde{v}$, the left part of the inner product must be equal to zero, and thus we obtain the system
	\begin{align}
		\tau &~=~ - \nabla \cdot (\rho v) \\
		0 &~=~ \nabla \cdot v \\
		0 &~=~ K^*K v - \rho \nabla \lambda - \nabla \gamma .
	\end{align}
\end{proof}

In the context of the Riemannian submersion, these are the necessary conditions for a vector field $v$ with image $\tau$ to be \emph{horizontal} with respect to the submersion on $\text{Sdiff}(\Omega)$. We will say more about this in follow-up work.

We also remark that in many cases, we expect $K^*K$ to be invertible so that $v$ can be chosen as
\begin{equation}
	v ~=~ (K^*K)^{-1}(\rho \nabla \lambda + \nabla \gamma) .
\end{equation}
We will say more about this equation and its interpretation in follow-up work.

Now, we examine necessary conditions for optimality for the \emph{optimal state-transfer problem} \eqref{transfer_prob}.

\begin{thm} \label{state_transfer_necc_conds_thm}
	For the pair $(\rho,v)$ to be optimal in \eqref{transfer_prob} under assumption \eqref{norm_K_eqn}, it is necessary that there exist $\lambda,\gamma : [0,T] \times \Omega \to \R$ such that
	\begin{align}
		\partial_t \rho &~=~ - \nabla \cdot (\rho v) \label{necc_cond_dyn_eqn} \\
		0 &~=~ \nabla \cdot v \\
		0 &~=~ \partial_t \lambda + \nabla \lambda \cdot v \\
		0 &~=~ K^*K v - \rho \nabla \lambda - \nabla \gamma  , \label{necc_cond_v_eqn}
	\end{align}
	in addition to the boundary conditions $\rho_0 = \rho_i$, $\rho_1 = \rho_f$.
\end{thm}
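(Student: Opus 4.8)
The plan is to treat \eqref{transfer_prob} as a constrained calculus-of-variations problem and to obtain the stated equations as the vanishing of the first variation of an augmented action. First I would form the Lagrangian by adjoining the continuity dynamics \eqref{necc_cond_dyn_eqn} with a costate multiplier $\lambda(t,x)$ and the incompressibility constraint with a multiplier (pressure) $\gamma(t,x)$:
\begin{equation}
	\mathcal{L}(\rho,v,\lambda,\gamma) ~=~ \int_0^1 \!\! \int_\Omega \Big[ \lb K^*Kv_t, v_t \rb + \lambda\big( \dot\rho + \nabla\cdot(\rho v) \big) + \gamma\, \nabla\cdot v \Big] \, dx \, dt .
\end{equation}
Stationarity of $\mathcal{L}$ with respect to $\lambda$ and $\gamma$ returns the primal constraints, i.e.\ \eqref{necc_cond_dyn_eqn} and $\nabla\cdot v = 0$; the content therefore lies in the variations with respect to $\rho$ and $v$.

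Next I would compute $\delta_\rho\mathcal{L}$. The only $\rho$-dependent terms are $\lambda\dot\rho$ and $\lambda\nabla\cdot(\rho v)$. Integrating the first by parts in $t$ produces $-\dot\lambda\,\delta\rho$ plus temporal boundary terms that vanish because the endpoints $\rho_0 = \rho_i$ and $\rho_1 = \rho_f$ are fixed, so that $\delta\rho = 0$ at $t=0,1$. Integrating the second by parts in $x$ produces $-(\nabla\lambda\cdot v)\,\delta\rho$ plus a boundary flux proportional to $\lambda\rho\,\delta\rho\,(v\cdot\hat{n})$ over $\partial\Omega$, which vanishes under either the no-flux or the no-slip condition on $v$. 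Requiring the resulting integrand to vanish for arbitrary $\delta\rho$ then yields the adjoint transport equation $\dot\lambda + \nabla\lambda\cdot v = 0$, which is exactly the third stated condition.

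Then I would compute $\delta_v\mathcal{L}$, varying $v$ over the admissible space $V$ \emph{without} imposing divergence-free on the variation, since that constraint is already enforced by $\gamma$. The quadratic term contributes $2\lb K^*Kv,\delta v\rb$; integrating $\lambda\nabla\cdot(\rho\,\delta v)$ and $\gamma\nabla\cdot\delta v$ by parts in $x$ gives $-\rho\nabla\lambda\cdot\delta v$ and $-\nabla\gamma\cdot\delta v$ respectively, once more with boundary terms annihilated by the boundary conditions obeyed by admissible variations. Demanding that this vanish for all $\delta v$ gives $2K^*Kv - \rho\nabla\lambda - \nabla\gamma = 0$; absorbing the harmless factor of $2$ by rescaling $\lambda$ and $\gamma$ (which leaves the adjoint equation invariant) recovers \eqref{necc_cond_v_eqn}. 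Together with the two constraints, this is the full system.

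The main obstacle is not the formal computation but its rigorous justification: each integration by parts demands enough regularity on $\rho$, $v$, $\lambda$, $\gamma$ and on $\partial\Omega$ for the boundary fluxes to be meaningful, and one must confirm that the admissible variation space is precisely $V$ with the stated boundary conditions so that every boundary term is legitimately discarded. Since the paper declares its results formal pending a choice of function spaces, I would carry out the variational calculation at the level of smooth fields, flag explicitly which condition (fixed endpoints in $t$, no-flux or no-slip in $x$) kills each boundary term, and defer the functional-analytic details — including the interpretation of $\gamma$ as a pressure arising from a Hodge/Helmholtz splitting of the $v$-variation, and of $\lambda$ as the costate whose transport equation expresses material conservation along particle trajectories — to the extended version.
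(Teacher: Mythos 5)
Your formal variational derivation is correct and is essentially the approach the paper intends: the paper itself defers this proof to an extended version, but its commentary after the theorem (the conditions are ``the original dynamics \ldots the condition that $v_t$ be optimal for each $\dot{\rho}_t$ \ldots and a new costate equation'') is exactly what your adjoined-Lagrangian computation produces, with $\lambda$ as the costate and $\gamma$ as the incompressibility pressure. Two minor slips to fix, neither of which affects the argument: the boundary flux from the $\rho$-variation is $\lambda\,\delta\rho\,(v\cdot\hat{n})$ (your extra factor of $\rho$ should not be there), and the cost integrand should be written pointwise, e.g.\ $(K^*K v_t)\cdot v_t$, since $\lb K^*Kv_t, v_t\rb$ already denotes the spatial integral.
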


\begin{proof}
	We solve this problem by the method of Langrange multipliers just as in the proof of Theorem \ref{vel_sel_necc_cond_thm}. The Lagrangian is given by
	\begin{equation}
		L(\rho,v,\lambda,\gamma) \, := \, \frac{1}{2} \langle K v , K v \rangle + \langle \lambda , \partial_t \rho + \nabla \cdot (\rho v) \rangle + \langle \gamma , \nabla \cdot v \rangle .
	\end{equation}
	After taking variations and setting the linear parts equal to zero, we obtain
	\begin{align}
		0 ~&=~ \lb \tilde{\lambda} , \partial_t \rho + \nabla \cdot (\rho v) \rb \\
		0 ~&=~ \lb \tilde{\gamma} , \nabla \cdot v \rb \\
		0 ~&=~ \lb \lambda , \partial_t \tilde{\rho} + \nabla \cdot (\tilde{\rho} v) \rb \\
		0 ~&=~ \lb K^*K v , \tilde{v} \rb + \lb \lambda , \nabla \cdot (\rho \tilde{v}) \rb + \lb \gamma , \nabla \cdot \tilde{v} \rb .
	\end{align}
	The first, second, and fourth equations are identical to those investigated in the proof of Theorem \ref{vel_sel_necc_cond_thm} (with $\tau = \partial_t \rho$), and we obtain the same equations. For the third equation, we first split the inner product into terms involving $\partial_t \tilde{\rho}$ and $\nabla \cdot (\tilde{\rho} v)$, integrate the first in time and the second in space (again using the divergence theorem) to obtain
	\begin{equation}
		0 ~=~ \lb - \partial_t \lambda , \tilde{\rho} \rb + \lb -\nabla \lambda , \tilde{\rho} v \rb .
	\end{equation}
	Finally, we move $v$ into the left side of the second inner product and combine terms to obtain
	\begin{equation}
		0 ~=~ -\lb \partial_t \lambda + \nabla \lambda \cdot v , \tilde{\rho} \rb .
	\end{equation}
	By the same reasoning as before, the left side of the inner product must be equal to zero. Thus we obtain the system 
	\begin{align}
		\partial_t \rho &~=~ - \nabla \cdot (\rho v) \label{necc_cond_dyn_eqn} \\
		0 &~=~ \nabla \cdot v \\
		0 &~=~ \partial_t \lambda + \nabla \lambda \cdot v \\
		0 &~=~ K^*K v - \rho \nabla \lambda - \nabla \gamma  . \label{necc_cond_v_eqn}
	\end{align}
	Since the curve $\rho$ has fixed endpoint constraints, this system has the transversality conditions $\rho_0 = \rho_i$, $\rho_1 = \rho_f$.
\end{proof}

The above conditions are exactly the original dynamics (including incompressibility), the condition that $v_t$ be optimal for each $\partial_t \rho_t$ (as in Theorem \ref{vel_sel_necc_cond_thm}), and a new \emph{costate equation}, describing the evolution of the costate $\lambda$ that enforces optimality. Since optimal here means shortest path, the interpretation of the costate equation in the Riemannian context is as a geodesic equation. (In Riemannian geometry, the geodesic equation describes curves of zero intrinsic acceleration, and can also be derived via the Riemannian curvature. Here, these curves are derived via a variational principle instead.)
Thus, on one hand, these necessary conditions describe geodesics/shortest paths in this Riemannian framework. On the other hand, since we have constructed the metric $m$ from the transfer cost in the optimal state-transfer problem, these geodesics exactly describe flow fields which transfer states with minimal effort (i.e. with maximal efficiency). Therefore, maximally efficient mixing fields must always be geodesics of this form.

\section{Conclusion} \label{conclusion_sec}

In this paper, we formulated a version of the fluid mixing problem as an optimal control problem which trades off between fluid mixedness and mixing effort. We showed that within this framework, the set of reachable fluid configurations can formally be endowed with a Riemannian metric induced by the control effort, and that the mixing problem can be decomposed into a series of more fundamental geometric problems. In particular, we showed that flow fields which are maximally efficient at mixing must be geodesics in this Riemannian space.

In terms of future work, there is still much to be done. First, while the geometry of incompressible flows has been thoroughly investigated in the flow-map setting (i.e. in $\text{Sdiff}(\Omega)$) with the $L^2$ geometry, the connection with the density setting, optimal transport, and fluid mixing remain relatively unexplored. Second, while the $L^2$ case has rich structure and many connections with existing theories, it is probably not the best penalty from an engineering codesign point-of-view, and other penalizations should be explored as well. Third, we have not said anything yet about the final-state selection problem or its connection with existing theories of fluid mixing. These are all potentially fruitful directions for future work.





\bibliographystyle{ieeetr}
\bibliography{library}

~~~

~~~

\end{document}